\newtheorem{definition}{Definition}
\newtheorem{example}{Example}
\newtheorem{theorem}{Theorem}
\newtheorem{lemma}{Lemma}
\newtheorem{corollary}{Corollary}
\DeclareMathOperator{\ef}{\mathbb{F}}
\DeclareMathOperator{\aaa}{\boldsymbol{A}}
\DeclareMathOperator{\gt}{\boldsymbol{GT}}
\DeclareMathOperator{\enn}{\mathbb{N}}
\DeclareMathOperator{\supp}{Supp}
\DeclareMathOperator{\ddd}{\boldsymbol{D}}
\DeclareMathOperator{\kr}{Ker}
\DeclareMathOperator{\lep}{LE}
\DeclareMathOperator{\wdeg}{wdeg}
\begin{document}

\title{A Dynamical System-based Key Equation for Decoding One-Point Algebraic-Geometry Codes}
\author{Ramamonjy ANDRIAMIFIDISOA\\
\small ramamonjy.andriamifidisoa@univ-antananarivo.mg \and
        Rufine Marius LALASOA \\
\small        larissamarius.lm@gmail.com\and
        Toussaint Joseph RABEHERIMANANA\\
\small     rabeherimanana.toussaint@yahoo.fr
       }
\date{\today}
\maketitle

\begin{abstract}
   A closer  look at linear recurring sequences allowed us to define the multiplication  of a univariate polynomial and a sequence, viewed as a power series with another variable, resulting in another sequence.  Extending this operation,  one gets the multiplication of matrices of multivariate polynomials and vectors of powers series. A dynamical system, according to U. Oberst is then the kernel of the linear mapping of modules defined by a polynomial matrix by this operation. Applying these tools in the decoding of the so-called one point algebraic-geometry codes, after showing that the syndrome array, which is the general transform of the error in a received word is a linear recurring sequence, we  construct a dynamical system. We then prove that this array  is the solution of  Cauchy's homogeneous equations with respect to the dynamical system. The aim of the Berlekamp-Massey-Sakata Algorithm in the decoding process being the determination of  the syndrome array, we have proved that in fact, this algorithm solves the Cauchy's homogeneous equations with respect to a dynamical system.
\end{abstract}

\tableofcontents

\section{Introduction}
\label{intro}
S. Sakata,   in \cite{Sakata-0,Sakata-2}, generalized the famous Berlekamp-Massey algorithm (\cite{Massey}) to the  two and multidimensional case. The result was the  (again) famous  Berlekamp-Massey-Sakata  (BMS) algorithm, whose aim is  to find a Gr\"{o}bner basis of the set of characteristic polynomials of a multidimensional sequence. He also used the algorithm to decode \textit{algebraic-geometry} (AG) codes (\cite{Sakata-3,Sakata-4}). The main difficulties  is that Sakata's  papers  involves  many  difficult notations and calculations.

Heegard and Saints (\cite{HS}) gave a shorter and clearer version of this algorithm,  and explained that, in the framework of the decoding process, the algorithm computes a sufficiently number of terms of the syndrome array and construct  sets of polynomials which ``converges'' to a Gr\"{o}bner basis, which allows the calculation of the syndrome array.

Since then, the BMS algorithm  has been   refined and widely used   by many authors, see \cite{Bras,Cox-2,Faugere,Kuijper,Sullivan},  and also D.~Augot, ``\textit{Les codes alg\'{e}briques principaux et leur d\'{e}codage}", Journ\'{e}es nationales du calcul formel. Luminy, mai 2010 and J.~Berto\-mieux and J. C.~Faug\`{e}re, ``\textit{In-depth comparison of the Berlekamp-Massey-Sakata and the Scalar-FGLM algorithms: the non adaptive variants}",\\ arXiv:1709.07168 [cs.SC] (2017).

Therefore, due to its importance,  we present here a  new explanation of the BMS algorithm, in the framework of the decoding process of  \textit{one point algebraic-geometry codes},  as in \cite{HS}.  To construct these codes,  one starts from a \textit{smooth irreducible projective curve} which have a unique point only at the \textit{hyperplane at infinity}, and a finite set of points of the curve, distinct from the point at infinity.\  The code is the defined as evaluations  of   certain \textit{rational functions} (\eqref{cr} and \eqref{rf}) on the curve on the set  of points (\eqref{code}). The conditions these functions have to satisfy is that they  have a unique \textit{pole}, which is  the point at the infinity, and\ moreover, the order of this pole is  less than an appropriate number, which satisfies an inequality involving the \textit{genus} of the curve and the number of evaluation points (\eqref{a}).

An important tool we use is the \textit{general transform} (Definition \eqref{gt}).  The crucial starting point of our result is that the general transform of the error in a received word is a \textit{linear recurring sequence} (Corollary \ref{E-LRS}). Here is where the notion of dynamical system can be introduced : the \textit{orthogonal} of the syndrome array is a polynomial module, and therefore has a \textit{Gr\"{o}bner basis}. We consider the dynamical system defined by this basis.\

We   prove in our main theorem (Theorem \ref{main})  that
the syndrome array of a received word is the solution of the  \textit{Cauchy's homogeneous problem }(Definition \ref{Cauchy-h}) with respect to the above dynamical system,  under the input/output representation (\eqref{io} and \eqref{Code-Sys}), with an appropriate initial data, defined on a ``Delta-set"(\eqref{Delta-set}).

 Our theorem provides  a new equation for the decoding problem.
We hope that our equation is a good starting point for understanding the BMS algorithm and decoding one point AG codes because it provides a clean and elegant  algebraic presentation of the algorithm and the decoding problem.

This paper is organized as follows: in section \ref{ds-cp}, we introduce  Oberst's dynamical systems theory and the Cauchy's homogeneous problem. In section \ref{ag-codes}, we present results about projective curves and one-point algebraic-geometry codes. In the last section  \ref{ce}, we state and prove our main theorem.

As we already mentioned in the abstract, the simple notion of \textit{linear recurring sequence } is useful to understand the operation denoted by  ``$\circ$" in  Section \ref{ds-cp}. A sequence $a=(a_n)_{n\in\enn}$ of elements of a commutative field  $\ef$ is said to be a linear recurring sequence (LRS) if the following equality holds:
\begin{equation}\label{LRS}
   P_0 a_n+P_1a_{n+1}\cdots +P_ia_{n+i}+\cdots+P_Na_{n+N}=0\quad\text{for}\quad n\in \enn,
 \end{equation}
where $N\geqslant 1$ is an integer, $P_i\in\ef$ for $i=0,\ldots,N$ with $P_N\neq 0$. Using equation \eqref{LRS}, we have that
\begin{equation*}
  a_{n+N} = -\frac{1}{P_N}(P_0 a_n+P_1a_{n+1}\cdots +P_ia_{n+i}+\cdots+P_Na_{n+N-1}),
\end{equation*}
so that we can calculate $a_{n+N}$ using the $N$ previous terms of the sequence, which are $a_n,\ldots,a_{n+N-1}$. \\
We observe that the left hand side of \eqref{LRS} is the $n$-th term of a new sequence  of elements of $\ef$. Denoting this sequence by $b=(b_n)_{n\in\enn}$, we have
\begin{equation}\label{b}
  b_n = \sum_{i=0}^NP_ia_{n+i}\quad\text{for}\quad n\in\enn.
\end{equation}
 Now, construct the univariate polynomial
\begin{equation*}
  P(X) =\sum_{i=0}^N P_i X^i\in \ef[X],
\end{equation*}
and write the sequences $a$ and $b$ as  power series in another variable, say $Y$:
\begin{equation*}
  a=a(Y) = \sum_{n=0}^\infty a_n Y^n,\quad b=b(Y)=\sum_{n=0}^\infty b_n Y^n.
\end{equation*}
We say that $b(Y)$ is the \textit{product} of $P(X)$ and $a(Y)$ and write
\begin{equation*}
b(Y)= P(X)\circ a(Y).
\end{equation*}
Using \eqref{b}, we have
\begin{equation}\label{rond}
P(X)\circ a(Y) =\sum_{n=0}^\infty\Bigl(\sum_{i=0}^NP_ia_{n+i}\Bigr)Y^n
\end{equation}
(compare with \eqref{shift}). The polynomial $P(X)$ is called a \textit{characteristic polynomial} of the sequence $a$.

\section{Oberst's algebraic dynamical systems and the Cauchy's homogeneous problem}\label{ds-cp}
Let  $\ef$ be a commutative field. For an integer $r\geqslant 1$, let  $X_1,\ldots,X_r$ and $Y_1,\ldots,Y_r$ distinct \textit{variables}.  The letter  $X$ (resp. $Y$) will denote the set of variables $(X_1,\ldots,X_r) $ (resp. $(Y_1,\ldots,Y_r$)).  For $ \alpha = (\alpha_1,\ldots,\alpha_r)\in\enn^r$, we define $X^\alpha$ (resp. $Y^\alpha$)  as the product
\begin{equation*}\label{X-Y-alpha}
  X^\alpha=X_1^{\alpha_1}\cdots X_r^{\alpha_r} \;({\rm resp.}\; Y^\alpha = Y_1^{\alpha_1}\cdots Y_r^{\alpha_r}) .
\end{equation*}
Let $\ddd=\ef[X_1,\ldots ,X_r]=\ef[X]$  be the $\ef$-vector space of the polynomials with the $r$ variables $X_1,\ldots, X_r$ and entries in $\ef$. An element of $\ddd$ can be  uniquely written as
\begin{equation*}
d(X_1,\ldots,X_r) =d(X)=\sum_{\alpha\in\enn^r} d_\alpha X^\alpha\quad  \text{ with}\quad  d_\alpha\in\ef\quad  \text{for all } \alpha\in\enn^r,
\end{equation*}
where $d_\alpha=0$ except for a finite number of $\alpha$'s. We fix a \textit{monomial ordering} $\leqslant_T$  on $\enn^r$, (\cite{Cox,Oberst}) which is then a \textit{well ordering}. For a non-zero element $d(X)\in\ddd$, we define the \textit{leading exponent} of $d(X)$ by
\begin{equation}\label{le}
\lep(d(X)) = \max_{\leqslant_T}\{\alpha\in\enn^r\;\vert\; d_\alpha\neq 0\}\in\enn^r.
\end{equation}
Let $\aaa=\ef[[Y_1,\ldots,Y_r]]=\ef[[Y]]$ be $\ef$-vector space of the formal power series with the variables $Y_1,\ldots,Y_r$ and entries in $\ef$. An element of $\aaa$ can be uniquely written as
\begin{equation*}
  W(Y_1,\ldots,Y_r) = W(Y) =\sum_{\alpha\in\enn^r}W_\alpha Y^\alpha,
\end{equation*}
where $W_\alpha\in\ef$ for all $\alpha\in\enn^r$.\\

  For integers $k,l\geqslant 1$, the set of matrices with $k$ rows and $l$ columns with entries in $\ddd$ is denoted by $\ddd^{k,l}$. An element $R(X)\in\ddd^{k,l}$   is of the form
\begin{equation*}
R(X) = (R_{ij}(X))_{1\leqslant i\leqslant k, 1\leqslant j\leqslant\ l},
\end{equation*}
where $R_{ij}(X)\in\ddd$ for $i=1,\dots, k$ and $j=1,\ldots, l$. With the multiplication by polynomials as external operation of $\ddd$ on $\ddd^{k,l}$, this latter becomes $\ddd$-module. The notation $\ddd^l$ (resp.  $\aaa^l$) will be for the set of polynomials   with one row and $l$ columns (resp.  power series in $\aaa$ with $l$ rows and one column).\\

The external operation, (also called \textit{multiplication}) of  $\ddd$ on $\aaa$ is defined by
\begin{align}\label{shift}
\begin{split}
\ddd\times\aaa&\longrightarrow\aaa\\
(d(X),W(Y))&\longmapsto d(X)\circ W(Y) = \sum_{\alpha\in\enn^r}(\sum_{\beta\in\enn^r}d_\beta W_{\alpha+\beta})Y^\alpha.
\end{split}
\end{align}
This operation provides $\aaa$ with a $\ddd$-module structure.  The set $\aaa^l$ becomes a $\ddd$-module too, with the external operation
\begin{align}\label{m-shift}
\begin{split}
\ddd\times\aaa^l&\longrightarrow\aaa^l\\
(d(X),(W_j(Y))_{j=1,\ldots,l})&\longmapsto (d(X)\circ W_j(Y))_{j=1,\ldots,l}.
\end{split}
\end{align}
More generally, given $R(X)\in\ddd^{k,l}$, the following mapping, also denoted by $R(X)$,  is a $\ddd$-linear mapping of modules
\begin{align}
\begin{split}
R(X) : \aaa^l&\longrightarrow\aaa^k\\
W(Y) &\longmapsto R(X)\circ W(Y)
\end{split}
\end{align}
where
\begin{align}\label{rw}
\begin{split}
R(X)\circ W(Y) &=\left(
           \begin{array}{c}
             \sum_{j=1}^lR_{1j}(X)\circ W_j(Y) \\
             \vdots \\
             \sum_j^lR_{kj}(X)\circ W_j(Y) \\
           \end{array}
         \right)\\
         & = \left(
     \begin{array}{c}
       \sum_{\rho\in\enn^r}(\sum_{j=1}^l\sum_{\alpha\in\enn^r} R_{1j\alpha}W_{j(\alpha+\rho)})Y^\rho \\
       \vdots \\
       \sum_{\rho\in\enn^r}(\sum_{j=1}^l\sum_{\alpha\in\enn^r} R_{ij\alpha}W_{j(\alpha+\rho)})Y^\rho \\
       \vdots \\
       \sum_{\rho\in\enn^r}(s\sum_{j=1}^l\sum_{\alpha\in\enn^r} R_{kj\alpha}W_{j(\alpha+\rho)})Y^\rho \\
     \end{array}
   \right)
\end{split}
\end{align}
is $\ddd$-linear (\cite{Andr-1,Oberst}. Note that this expression of  $R(X)\circ W(Y)$  is similar to that of the usual matrix-vector multiplication). Its kernel is then a $\ddd$-submodule of $\aaa^l$. This legitimates the following definition:
\begin{definition}[Oberst, \cite{Oberst}]\label{sys}An algebraic dynamical system (or simply a \textit{system}) is a $\ddd$-submodule of $\aaa^l$ of the form
\begin{equation*}
  S = \ker R(X) =\{W(Y)\in\aaa^l\;\vert\; R(X)\circ W(Y) = 0\}
\end{equation*}
where $R(X)\in\ddd^{k,l}$ and also denotes  the $\ddd$-linear mapping of $\ddd$-modules defined by \eqref{rw}.
\end{definition}

The integer $r$ is the \textit{dimension} of the system.  Willems treated the one-dimensional case only. An element $W$ of a system $S$ is called a \textit{trajectory}.
\begin{example}[Linear recurring sequence]\label{LRS}Take $r=1$. Then $\ef[X]$ is the set of univariate polynomials in $X$ and $\ef[[Y]]$ the set of power series in the unique variable $Y$. A polynomial $P(X)\in\ef[X]$ defines the dynamical system
\begin{equation*}
  \ker P(X) = \{a(Y) = \sum_{n=0}^\infty a_nY^n\in\ef[|Y]]\;|\;P(X)\circ W(Y)=0\}.
\end{equation*}
  If $P(X)=0$, then $\ker P(X)=\ef[[Y]]$, otherwise,  using \eqref{shift}, for $r=1$, we are in the situation in \eqref{rond}, so that the elements of $\kr P(X)$ are the linear recurring sequences having $P(X)$ as a characteristic polynomial.
 \end{example}

For a subset $P$ of $\ddd^l$  and a subset $Q$ of $\aaa^l$, we define their \textit{orthogonals} by
\begin{align*}
&P^\perp = \{W(Y)\in\aaa^l\;|\;d(X)\circ W(Y)=0\;\text{for}\;d(X)\in P\}\subset\aaa^l\\
&Q^\perp = \{ d(X)\in\ddd^l\;|\;d(X)\circ W(Y)=0\; \text{for}\;W(Y)\in\aaa^l\}\subset\ddd^l\label{syst-orth}.
\end{align*}
$P^\perp$ is a $\ddd$-submodule of $\aaa^l$ and $Q^\perp$ is a $\ddd$-submodule of $\ddd^l$ (\cite{Oberst}).
\begin{example}\label{orths}For a non-zero polynomial $P(X)\in\ddd$, the set $P(X)^\perp=\{P(X)\}^\perp$ is those of the LRS  having $P(X)$ as a characteristic polynomial. For a power series  $W(Y)\in \aaa$, the set $W(Y)^\perp=\{W(Y)\}^\perp$ is those  of the characteristic polynomials of $W(Y)$ and the zero polynomial.

\end{example}

In \cite{Oberst}, it is proven that every system $S$ admits an Input/Output  representation
\begin{equation}\label{io}
S = \Biggl\{\left (\begin {array}{c} U\\\noalign{\medskip}V\end
{array}\right ) \in\aaa^m\times\aaa^p \;\vert\;P(X)\circ
V = Q(X)\circ U \Biggr\},
\end{equation}
where $m,p\geqslant 1$ are integers with
\begin{equation}
 l = m\;+\;p,\quad P\in\ddd^{k,p},\quad Q\in\ddd^{k,m},
\end{equation}
the columns of $P$ being $\mathbf{K}$-linearly independent with $\mathbf{K}=\ef(X_1,\ldots,X_r)$ and
\begin{equation}
  {\rm rank}( P) = {\rm rank}(R) = p.
\end{equation}
The system written in the form \eqref{io} is called an \textit{I/O system}.

Now, we need some notations for an integer $p\geqslant 0$, we write
\begin{equation}\label{p}
[p]=\{1,\ldots,p\},
\end{equation}
and $\Gamma$ denotes a subset of $[p]\times\enn^r$
(If $p=1$, then we identify $[p]\times\enn^r$  with $\enn^r$). We may identify $\;\;\ef^{[p]\times\enn^r}$ with  $\aaa^p$ and consider $\ef_p^\Gamma$ as a subset of $\aaa^p$, where $\ef^\Gamma$ is the set of mappings from $\Gamma$ to $\ef$.\\

\begin{definition}[Oberst, \cite{Oberst}]\label{Cauchy-h}The homogeneous  Cauchy
problem  $(P(X),0,\Gamma)$ for the I/O system  \eqref{io} is the system of equations
\begin{equation}\label{chom}
  \begin{cases}
  \quad P(X)\circ V =  0 , \\
  \quad V_{\vert\Gamma}  =  V_0,\; V_0\in\ef^\Gamma,
  \end{cases}
\end{equation}
where the unknown is  $V\in\aaa^p$, the initial data being $V_0\in\ef^\Gamma$.
\end{definition}

\section{On point algebraic-geometry codes}\label{ag-codes}
 For algebraic geometry, we refer to \cite{Cox,Fulton} and the construction of one point AG codes, we refer to \cite{HS}. We recall here the basic notations and ideas  for the construction of such codes.\ From now on, $\ef_q$ denotes the Galois field with $q$ elements, where $q$ is a power of a positive prime integer. Let $\ef$ be the algebraic closure of $\ef_q$ and $r\geqslant 1$ and integer.

We write $X=(X_1,\ldots,X_r)$ as in section \ref{ds-cp}. We will use  the polynomial rings $\ef_q[x_1,\ldots,X_r], \ef[X_1,\ldots,X_r]$ and $\ef[X_0,\ldots,X_r]$, where $X_0$ is   another variable. We denote by $\mathbb{P}^r(\ef)$  the $r$-dimensional \textit{projective space} over $\ef$. An element  of $\mathbb{P}^r(\ef)$   is of the form $P=(a_0:a_1:\ldots:a_r)$, where $a_i\in\ef$. The \textit{hyperplane at infinity} is the set $\mathbb{H}^r_\infty$  of the points of the form $(0:a_1:\ldots:a_r)\in\ef^{r+1}$. One may then write (up to an isomorphism) $\mathbb{P}^r(\ef)=\ef^r\cup\mathbb{H}^r_\infty$, and identify a point $P=(a_1:\ldots:a_r)\in\ef_q$ with the point $P=(1:a_1:\ldots:a_r)\in\mathbb{P}^r(\ef)$.

We will consider a \textit{smooth irreducible projective curve} $\mathcal{X}$ \textit{defined over}\\ $\ef_q[X_1,\ldots,X_r]$. It is an \textit{affine variety} of \textit{dimension} $1$, defined by
\begin{equation*}
  \mathcal{X} =\{P=(a_0,\ldots,a_r)\in \mathbb{P}^r(\ef)\;\vert\;F(P)=0\;\text{for}\; F\in\mathcal{F}\},
\end{equation*}
where $\mathcal{F}$ is a set of \textit{homogeneous polynomials} of $\ef[X_0,X_1,\ldots,X_r]$.
The ideal of $\mathcal{X}$ is
\begin{equation*}
  I(\mathcal{X}) =\{F\in\ef[X_0,X_1,\ldots,X_r]\;\vert\;F(P)=0\;\text{for}\;P\in\mathcal{X}\}.
\end{equation*}
The \textit{coordinate ring} of $\mathcal{X}$ is the ring
\begin{equation}\label{cr}
  \ef[\mathcal{X}] = \ef[X_0,X_1,\ldots,X_r]/I(\mathcal{X}).
\end{equation}
The $\ef[\mathcal{X}]$ is an integral domain and  its  field of fractions is called the \textit{field of rational functions }on $\mathcal{X}$ and denoted by $\ef(\mathcal{X})$. We may write
\begin{align}\label{rf}
\begin{split}
\ef(\mathcal{X}) = \{f(X_0,X_1,\ldots,X_r)/g(X_0,X_1,\ldots,X_r)\;\vert\;f,g\in\ef[X_0,X_1,\ldots,X_r]\\
\text{and}\;g(X_0,X_1,\ldots,X_r)\notin I(\mathcal{X})\}.
\end{split}
\end{align}
The curve $\mathcal{X}$ is constructed from a smooth irreducible \textit{affine curv}e $\mathcal{X}_{aff}$ \textit{\textit{defined over}} $\ef_q[X_1,\ldots,X_r]$, which is of the form
\begin{equation*}
  \mathcal{X}_{aff}=\{P=(a_1,\ldots,a_r)\in\ef^r\;\vert\;F(P)=0\;\text{for}\;P\in\mathcal{G}\},
\end{equation*}
where $\mathcal{G}$ is a set of polynomials in $\ef[X_1,\ldots,X_r]$. The ideal of $\mathcal{X}_{aff}$ is
\begin{equation*}
  I(\mathcal{X}_{aff})=\{F\in\ef[X_1,\ldots,X_r]\;\vert\;F(P)=0\;\text{for}\; P\in\mathcal{X}_{aff}\}.
\end{equation*}
The terminology ``$\mathcal{X}$ (or $\mathcal{X}_{aff}$) defined over $\ef_q[X_1,\ldots,X_r]$" means that the ideal $I(\mathcal{X}_{aff})$ is generated by polynomials in $\ef_q[X_1,\ldots,X_r]$.
  As in \eqref{cr} and \eqref{rf}, we define the coordinate ring (resp. the field of rational functions) of $\mathcal{X}_{aff}$ :
\begin{align*}
&\ef[\mathcal{X}_{aff}] = \ef[X]/I(\mathcal{X}_{aff}),\\
&\ef(\mathcal{X}_{aff}) = \{f(X)/g(X)\;\vert\;f,g\in\ef[X]\;\text{and}\;g(X)\notin I(\mathcal{X}_{aff})\}.
\end{align*}

   The field of rational functions $\ef(\mathcal{X})$ is \textit{birationally equivalent} to $\ef(\mathcal{X}_{aff})$, so we may use this latter only.  Moreover, the projective  curve  we consider will have a unique point $Q$ lying at  the hyperplane at infinity  and is \textit{in special position} with respect to $Q$. Let $a$ be an integer verifying
\begin{equation}\label{a}
2g-2<a<n,
\end{equation}
where $g$ is the \textit{genus} of $\mathcal{X}$.    Let $\mathcal{L}(aQ)$ be the set of the functions $\phi$ on $\ef(\mathcal{X}_{aff})$ which have a unique \textit{pole} at $Q$, of \textit{order} less than $a$.   \\

 Let  $\mathcal{P}=\{P_1,\ldots,P_n\}$ a set of points of $\mathcal{X}$.\ The code $\mathcal{C}_L(\mathcal{P},aQ)$ is the evaluation of the functions of the  vector space $\mathcal{L}(aQ)$
\begin{equation}\label{code}
\mathcal{C}_L(\mathcal{P},aQ) = \{(\phi(P_1),\ldots,\phi(P_n))\in\ef_q^n\;
\vert\;\phi\in\mathcal{L}(aQ)\},
\end{equation}
and its \textit{dual} is
\begin{equation}\label{codeag}
\mathcal{C}_L(\mathcal{P},aQ)^\perp = \{(c_1,...,c_n)\in\ef_q^n\;\vert\;
\sum_{j=1}^nc_j\phi(P_j) = 0\;\forall \phi\in\mathcal{L}(aQ)\}.\\
\end{equation}
There exists $o_1,\ldots,o_r\in\enn\setminus\{0\}$ such that for a monomial
$M=X_1^{i_1}\cdots X_r^{i_r}$, the \textit{pole order }of $M$ at $Q$ is
\begin{equation*}
v_Q(M) =-(o_1i_1+\cdots +o_ri_r),
\end{equation*}
thus $v_Q(X_i)=-o_i$ for $i=1,\ldots,r$. We  may define the monomial order
\begin{equation*}
  \wdeg(X^\alpha) = \wdeg(X_1^{\alpha_1}\cdots X_r^{\alpha_r}) =(o_1\alpha_1+\cdots+\alpha_ri_r). \end{equation*}
A generating family of $\mathcal{C}_L(\mathcal{P},aQ)$ is then
\begin{equation*}
  \{(X^\alpha(P_1),\ldots,X^\alpha(P_r))\;\vert\;\wdeg(X^\alpha)\leqslant a\},
\end{equation*}
with $X^\alpha(P)=x_1^{\alpha_1}\cdots x_r^{\alpha_r}$, where $P=(1:x_1:\ldots :x_r)$. As a consequence, one has a much simpler form of the code $\mathcal{C}_L(\mathcal{P},aQ)^\perp$:
\begin{equation}\label{agc}
\mathcal{C}_L(\mathcal{P},aQ)^\perp = \{(c_1,...,c_n)\in\ef_q^n\;\vert\;\sum_{i=1}^n c_iX^\alpha(P_i) = 0\;\;\text{for} \;\;\alpha\;\;\text{such that}\;\;\wdeg(\alpha)\leqslant a\}.
\end{equation}
Now, we use the sets $\aaa$ and $\ddd$, defined as in Section \ref{ds-cp}, using the field $\ef$.
\begin{definition}[\cite{HS}]The generalized transform is
\begin{align}\label{gt}
\begin{split}
  \gt : \ef_q^n&\longrightarrow\aaa,\\
  w&\longmapsto
  W(Y) = \sum_{\alpha\in\enn^r}\bigl(\sum_{i=1}^nw_iX^\alpha(P_i)\bigr)Y^\alpha.\\
 \end{split}
 \end{align}
\end{definition}

 This transform  defines an $\ef_q$-injective linear mapping.\\

 Now,  we consider the situation in which a codeword $c$ of our code has been sent through a \textit{communication channel}.  The received word, say $w\in\ef_q^n$ is not necessarily equal to $c$, because of a possible \textit{error} $e$ produced by the channel. We may write
 \begin{equation}\label{err}
  w=c+e.
 \end{equation}
 Of course, the receiver does not know  either $c$ or $e$. The problem is to  find $e$ in order to know $c =w-e$. Instead of finding $e$ directly, one constructs the \textit{syndrome array}.

 \begin{definition}[\cite{HS}]The syndrome array is
 \begin{equation}\label{e-GT}
   E =\gt(e) = E(Y) = \sum_{\alpha\in\enn^r}E_\alpha
   Y^\alpha\in\aaa.
\end{equation}
 \end{definition}

\begin{definition}[\cite{HS}]The \textit{errors locator ideal} is
\begin{equation}
  E^\perp = \{ F(X)\in\ddd\;|\;F(X)\circ E(Y)=0\}\subset\ddd.
\end{equation}
\end{definition}

 We are going to show that if $E\neq 0$, then  $E^\perp\neq \{0\}$, which means that $E$ is a linear recurring sequence (\ref{LRS}). Using \eqref{shift}, this yields
\begin{equation*}
\sum_{\beta\in\enn^r} F_\beta E_{\alpha+\beta} = 0\quad\text{for}\quad \alpha\in\enn^r,
\end{equation*}
where $F(X)=\sum_{\beta\in\enn^r}F_\beta X^\beta$.
  For this purpose, we will need the following lemma:\\
\begin{lemma}[\cite{HS}]\label{ve}
\label{thloc} For an AG code, one has
\begin{equation*}
E^\perp = \mathbf{I}(\supp(e)) = \{\; F(X)\in\ddd\;\vert\;
F(P)=0\;\;\forall\:P\in\supp(e)\;\},
\end{equation*}
where $\supp(e)=\{P_i\in\mathcal{P}\;\vert\; (i\in\{1,\ldots,n\}) \;\ e_i\neq 0\}$.\\
\end{lemma}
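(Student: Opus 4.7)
The plan is to prove the two inclusions separately, after making the coefficients of $F(X)\circ E(Y)$ completely explicit. First I would unfold the operation $\circ$ from \eqref{shift}: writing $F(X)=\sum_\beta F_\beta X^\beta$ and $E_\gamma=\sum_{i=1}^n e_i X^\gamma(P_i)$ (from \eqref{e-GT} and \eqref{gt}), the coefficient of $Y^\alpha$ in $F(X)\circ E(Y)$ becomes
\begin{equation*}
\sum_{\beta\in\enn^r} F_\beta E_{\alpha+\beta}
=\sum_{i=1}^n e_i\, X^\alpha(P_i)\sum_{\beta\in\enn^r} F_\beta X^\beta(P_i)
=\sum_{i=1}^n e_i F(P_i)\, X^\alpha(P_i),
\end{equation*}
the reordering being legitimate since only finitely many $F_\beta$ are non-zero. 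In short, $F(X)\circ E(Y)=\gt(e\cdot F(\mathcal{P}))$, where $e\cdot F(\mathcal{P})$ is the componentwise product $(e_iF(P_i))_{i=1}^n$. This single identity is the whole engine of the proof.

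From this identity the inclusion $\mathbf{I}(\supp(e))\subset E^\perp$ is immediate: if $F(P_i)=0$ whenever $e_i\neq 0$, then $e_iF(P_i)=0$ for every $i$, so every coefficient of $F(X)\circ E(Y)$ vanishes. For the reverse inclusion $E^\perp\subset\mathbf{I}(\supp(e))$, assume $F(X)\circ E(Y)=0$, so that
\begin{equation*}
\sum_{i=1}^n e_iF(P_i)\,X^\alpha(P_i)=0\qquad\text{for every }\alpha\in\enn^r.
\end{equation*}
Setting $g_i=e_iF(P_i)\in\ef$, this says that the vector $g=(g_1,\ldots,g_n)$ is annihilated by every monomial evaluation at $\mathcal{P}$.

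The one step requiring an idea is then the injectivity of the evaluation: the points $P_1,\ldots,P_n$ are distinct, so by Lagrange-style interpolation there exist polynomials $h_j\in\ef[X]$ with $h_j(P_i)=\delta_{ij}$. Expanding each $h_j$ as an $\ef$-linear combination of the $X^\alpha$ and using $\ef$-linearity in the displayed equation, we obtain $g_j=\sum_i g_ih_j(P_i)=0$ for every $j$. Hence $e_iF(P_i)=0$ for all $i$, and whenever $P_i\in\supp(e)$ (i.e.\ $e_i\neq 0$), this forces $F(P_i)=0$, so $F\in\mathbf{I}(\supp(e))$.

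The only place where care is needed is the last paragraph: the paper states the injectivity of $\gt$ only over $\ef_q$, whereas the vector $g$ has entries in $\ef$. This is handled by the separating-polynomial argument above, which works verbatim over $\ef$ because the $P_i$ are distinct points of $\ef^r$. Once this is in place, the lemma follows from the coefficient identity derived at the start; no other ingredient from algebraic geometry or coding theory is needed, and in particular the specific form of $\mathcal{L}(aQ)$ plays no role here.
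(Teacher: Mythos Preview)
The paper does not actually prove this lemma; it is stated with the citation \cite{HS} and left without proof (the proof environment that follows in the text belongs to Corollary~\ref{Elrs}, not to the lemma). Your argument is correct and self-contained: the coefficient identity $[F(X)\circ E(Y)]_\alpha=\sum_i e_iF(P_i)\,X^\alpha(P_i)$ yields the inclusion $\mathbf{I}(\supp(e))\subset E^\perp$ immediately, and your separating-polynomial step over $\ef$ (rather than relying on the $\ef_q$-injectivity of $\gt$) correctly handles the reverse inclusion.
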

We then have
what we need :\\
\begin{corollary}\label{Elrs}If $E\neq 0$, then $E^\perp\neq \{0\}$.\\
\end{corollary}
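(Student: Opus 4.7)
The plan is to reduce the statement to Lemma \ref{ve} and then exhibit an explicit non-zero polynomial vanishing on $\supp(e)$. The hypothesis $E \neq 0$ will first be converted into a statement about $e$, and then into a statement about a finite set of points on which we can easily construct vanishing polynomials.

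First, I would invoke the injectivity of the generalized transform $\gt$, which is asserted right after Definition \eqref{gt}. Since $E = \gt(e)$ and $\gt$ is $\ef_q$-linear and injective, the assumption $E \neq 0$ forces $e \neq 0$. Consequently $\supp(e)$ is a non-empty subset of $\mathcal{P} = \{P_1,\ldots,P_n\}$; in particular it is a non-empty \emph{finite} set of points in affine space, say $\supp(e) = \{Q_1,\ldots,Q_s\}$ with $s \geqslant 1$ and each $Q_j = (x_{j,1},\ldots,x_{j,r}) \in \ef^r$.

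Next I would appeal to Lemma \ref{ve} to identify $E^\perp$ with the vanishing ideal $\mathbf{I}(\supp(e))$. It then suffices to exhibit a single non-zero element of $\mathbf{I}(\supp(e))$. The simplest choice is the univariate polynomial
\begin{equation*}
F(X) = \prod_{j=1}^{s} (X_1 - x_{j,1}) \in \ef[X_1] \subset \ddd,
\end{equation*}
which is non-zero (a non-trivial product of linear factors in $\ef[X_1]$) and which evaluates to $0$ at each $Q_j$ because the factor indexed by $j$ vanishes there. Hence $F \in \mathbf{I}(\supp(e)) = E^\perp$ and $F \neq 0$, so $E^\perp \neq \{0\}$.

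There is essentially no obstacle here beyond invoking the two ingredients correctly: the injectivity of $\gt$ (to move from $E \neq 0$ to $e \neq 0$) and Lemma \ref{ve} (to convert the orthogonal module into a geometric vanishing ideal on a finite point set). The only minor care needed is to give an honest construction of a non-zero polynomial vanishing on a finite set of points in $\ef^r$; the product of linear forms in a single chosen variable above does the job and avoids any appeal to deeper facts such as the zero-dimensionality of $\mathbf{I}(\supp(e))$ or the Nullstellensatz.
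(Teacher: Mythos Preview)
Your proof is correct and follows essentially the same approach as the paper: reduce to Lemma~\ref{ve} and then exhibit an explicit non-zero polynomial vanishing on the finite set $\supp(e)$ via a product of linear factors. The only cosmetic differences are that you invoke the injectivity of $\gt$ explicitly to pass from $E\neq 0$ to $e\neq 0$, and you use the simpler one-variable product $\prod_{j}(X_1 - x_{j,1})$ whereas the paper uses the full product $\prod_{i=1}^{r}\prod_{j=1}^{m}(X_i - a_i^{(j)})$; both choices work equally well.
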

\begin{proof} If $\supp(e) = \{Q_1,\ldots,Q_m\}\subset\mathcal{P}$
where
\begin{equation*}
Q_j = (a_1^{(j)},\ldots,a_r^{(j)})\in\ef^r\quad\text{for}\quad j=1,\ldots, m,
\end{equation*}
then the polynomial
\begin{equation*}
F(X_1,\ldots,X_r) = \prod_{i=1}^r\prod_{j=1}^m(X_i\;-\;a_i^{(j)})
\end{equation*}
is non-zero and verifies
\begin{equation*}
F(Q_j) =  0\quad\text{for}\quad j=1,\dots m.
\end{equation*}
Thus $F(X)\in \mathbf{I}(\supp(e))$ and by lemma \ref{ve}, it follows that $F(X)\in E^\perp$.\qed
\end{proof}
We have obtained what we need :
\begin{corollary}\label{E-LRS}. The syndrome array $E$ is a linear recurring sequence.

\end{corollary}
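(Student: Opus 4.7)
The plan is to reduce the corollary immediately to Corollary \ref{Elrs}, reading its conclusion through the $\circ$-operation defined in \eqref{shift}. First I would dispose of the trivial case: if $E = 0$, then every polynomial $F(X)\in\ddd$ satisfies $F(X)\circ E(Y) = 0$, so $E$ is (trivially) a multidimensional linear recurring sequence, with any non-zero $F(X)$ serving as a characteristic polynomial.

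Next, I would treat the nontrivial case $E\neq 0$. By Corollary \ref{Elrs}, there exists a non-zero $F(X)\in E^\perp$, so $F(X)\circ E(Y) = 0$. Writing $F(X) = \sum_{\beta\in\enn^r} F_\beta X^\beta$ with only finitely many non-zero $F_\beta$, and $E(Y) = \sum_{\alpha\in\enn^r}E_\alpha Y^\alpha$, the definition \eqref{shift} of $\circ$ unfolds this equality coefficient by coefficient as
\begin{equation*}
\sum_{\beta\in\enn^r} F_\beta\, E_{\alpha+\beta} \;=\; 0\qquad\text{for every } \alpha\in\enn^r.
\end{equation*}
This is precisely the $r$-dimensional generalization of the recurrence \eqref{LRS}: a non-zero polynomial relation whose largest-exponent coefficient (with respect to the fixed monomial ordering $\leqslant_T$, i.e.\ the coefficient indexed by $\lep(F(X))$) is non-zero. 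Hence each value $E_{\alpha+\lep(F)}$ is determined by finitely many earlier values $E_{\alpha+\beta}$ for $\beta<_T\lep(F)$, so $E$ is a linear recurring sequence with characteristic polynomial $F(X)$.

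The only genuine point requiring care is the interpretation step: namely, recognizing that for $r>1$ the natural definition of ``linear recurring sequence'' is precisely the condition $E^\perp\neq\{0\}$ via the operation \eqref{shift}, exactly as Example \ref{LRS} formalizes it in dimension one. Everything else is immediate from the preceding corollary, so I do not anticipate any real obstacle beyond this bookkeeping.
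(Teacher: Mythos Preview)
Your proposal is correct and matches the paper's approach exactly: the paper states just before Lemma~\ref{ve} that ``$E^\perp\neq\{0\}$ \ldots means that $E$ is a linear recurring sequence'' and unfolds $F(X)\circ E(Y)=0$ via \eqref{shift} into the same coefficientwise recurrence you wrote, then records Corollary~\ref{E-LRS} as the immediate consequence of Corollary~\ref{Elrs}. Your explicit handling of the trivial case $E=0$ is a small addition the paper leaves implicit, but otherwise the arguments coincide.
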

\section{Cauchy's equations for the syndrome array}\label{ce}

By Corollary \ref{Elrs}, if $E\neq 0$, the ideal $E^\perp$ is non zero. Let $\leqslant_+$ be the partial order defined on $\enn^r$ by
\begin{equation*}
\alpha =(\alpha,\ldots,\alpha_r)\leqslant_+\beta=(\beta_1,\ldots,\beta_r)\Longleftrightarrow(\forall i\in\{1,\ldots,r\})\;\;\alpha_i\leqslant\beta_i
\end{equation*}
for $\alpha$ and $\beta\in\enn^r$. Then $E^\perp$  has a \textit{Gr\"{o}bner basis}
$\mathcal{G}= \{G_1(X),...,G_k(X)\}$ (with respect to the monomial order $\leqslant_T$ in Section \ref{ds-cp}) where  $G_i(X)\in\ddd$ for
$i=1,\ldots k$ (\cite{Cox,Oberst}). Consider the ``Delta-sets'' (\cite{Faugere,HS,Sakata-0})
\begin{align}\label{Delta-set}
\begin{split}
&\Delta(E^\perp) = \{\alpha\in\enn^r\;\vert\;\;(\exists\; F(X)\in E^\perp),\; \alpha\leqslant_+{\rm LE}(F(X))\},\\
&\Delta(\mathcal{G}) = \{\alpha\in\enn^r\;\vert\; (\exists\; i\in\{1,\ldots,k\} ),\;\alpha\leqslant_+{\rm LE}(G_i(X))\}
\end{split}
\end{align}
and the set
\begin{equation*}
\lep (E^\perp) = \{\lep (F(X))\;\vert\; F(X)\in E^\perp\}.
\end{equation*}
Since $\mathcal{G}$ is a Gr\"{o}bner basis of $E^\perp$, we have
\begin{align*}
\begin{split}
&\Delta(E^\perp) = \Delta(\mathcal{G})\\
&\lep (E^\perp) = \bigcup_{i=1}^n\; (\lep(G_i(X))+\enn^r),\\
\end{split}
\end{align*}
so that
\begin{equation}\label{deltas}
\enn^r = \Delta(E^\perp)\bigcup \lep (E^\perp) =  \Delta(\mathcal{G})\bigcup_{i=1}^n\; (\lep (G_i(X))+\enn^r).
\end{equation}
(\cite{Cox,HS,Oberst}). Let $G(X)$ be the matrix
 \begin{equation*}
  G(X) = \begin{pmatrix}
    G_1(X) \\
    \vdots\\
    G_k(X)\
  \end{pmatrix}\in\ddd^{k,1}
\end{equation*}
and consider the system
\begin{equation}\label{Code-Sys}
S=\{W\in\aaa\;\vert\;G(X)\circ W = 0\}.
\end{equation}
The (unique) column of the matrix $G(X)$ is obviously $\mathbf{K}$-linearly independent, where $\mathbf{K}$ is the field of fractions of $\ddd$. Thus, according to \ref{io}, $S$ is a I/O system, with $p=m=1, Q=0\in\ddd^{k,1}$ and $U=0\in\aaa$.
Therefore, we may, as in \ref{Cauchy-h}, consider the Cauchy's homogeneous equations with respect to $S$.\\

Here is our main theorem:\\
\begin{theorem}\label{main}The syndrome $E$ is the unique solution of the Cauchy's homogeneous equations\\$(G(X),0,\Delta(\mathcal{G}))$:
\begin{equation}\label{cauchy0}
  \begin{cases}
   \quad G(X)\circ E =0, \\
     \quad  E_{\vert\Delta(\mathcal{G})}= V_0,
  \end{cases}
\end{equation}
where $V_0\in\ef_q^{\Delta(\mathcal{G})}$ is an arbitrary element.\\
\end{theorem}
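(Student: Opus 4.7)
My plan is to split the argument into \emph{existence} and \emph{uniqueness}. Existence is essentially a tautology: each polynomial $G_i(X)$ lies in $E^\perp$ by construction, so $G_i(X)\circ E=0$ for every $i$, hence $G(X)\circ E=0$; taking $V_0:=E_{|\Delta(\mathcal{G})}\in\ef_q^{\Delta(\mathcal{G})}$ makes the initial condition true by definition. So $E$ is a solution of (\ref{cauchy0}).

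For uniqueness, I would show the stronger statement that any $W\in\aaa$ satisfying $G(X)\circ W=0$ together with $W_{|\Delta(\mathcal{G})}=0$ is identically zero; applied to $W:=E-E'$ for a second solution $E'$ this yields $E=E'$. The idea is strong induction on $\alpha\in\enn^r$ with respect to the monomial ordering $\leqslant_T$, which is a well-ordering. Using the partition (\ref{deltas}), every $\alpha\in\enn^r$ lies either in $\Delta(\mathcal{G})$ or in $\lep(E^\perp)$, and these two cases will be handled separately in the inductive step.

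If $\alpha\in\Delta(\mathcal{G})$, then $W_\alpha=0$ by the initial condition, with nothing to prove. If instead $\alpha\in\lep(E^\perp)=\bigcup_i(\lep(G_i)+\enn^r)$, we pick some $i$ and $\gamma\in\enn^r$ with $\alpha=\gamma+\lep(G_i)$. Reading off the coefficient of $Y^\gamma$ in the identity $G_i(X)\circ W=0$ via formula (\ref{shift}) gives
\begin{equation*}
\sum_{\beta\in\enn^r} G_{i,\beta}\, W_{\gamma+\beta}=0,
\end{equation*}
whose term of largest $\leqslant_T$-exponent is precisely the one with $\beta=\lep(G_i)$, contributing $G_{i,\lep(G_i)}\,W_\alpha$. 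Since $G_{i,\lep(G_i)}\neq 0$, I can solve for $W_\alpha$ as a linear combination of the values $W_{\gamma+\beta}$ with $\beta<_T\lep(G_i)$. Because $\leqslant_T$ is a monomial ordering it is compatible with translation, so $\gamma+\beta<_T\gamma+\lep(G_i)=\alpha$; the induction hypothesis then forces $W_{\gamma+\beta}=0$, and therefore $W_\alpha=0$ as well.

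I do not expect a real obstacle here: the argument is a clean strong induction and only uses standard facts about Gr\"obner bases (the disjoint decomposition $\enn^r=\Delta(\mathcal{G})\sqcup\lep(E^\perp)$) and the defining compatibility of $\leqslant_T$ with addition. The one point requiring a little care is verifying that the recurrence genuinely reduces to \emph{strictly smaller} exponents in the well-ordering, which is exactly what translation-invariance of $\leqslant_T$ guarantees; everything else is bookkeeping.
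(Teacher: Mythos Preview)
Your proposal is correct and follows essentially the same route as the paper: both arguments use the decomposition $\enn^r=\Delta(\mathcal{G})\cup\lep(E^\perp)$ together with transfinite (strong) induction along the well-ordering $\leqslant_T$, extracting from $G_i(X)\circ W=0$ a recurrence that expresses $W_\alpha$ in terms of strictly $\leqslant_T$-smaller coefficients via translation-invariance. The only cosmetic difference is that you phrase uniqueness by passing to the difference $W=E-E'$ and showing it vanishes, whereas the paper computes $W_\alpha$ directly from the initial data; the underlying inductive step is identical.
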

\begin{proof} We are going to prove that \eqref{cauchy0} is verified by  all element $W$ of $S$, hence true for the particular case $W=E$.  The first equation of \eqref{cauchy0} follows from the construction of $S$. Now,  write $\Gamma=\Delta(\mathcal{G})$.
 Each trajectory $W$ of $S$ is then uniquely determined by its restriction to $\Gamma$, which is $V_0$. Indeed, suppose that  $W_\alpha$ is known and is equal to $V_{0\alpha}$ for
$\alpha\in\Gamma$. We are going to calculate  $W_\alpha$ by \textit{n\oe therian {\rm or} transfinite induction} (see \cite{Oberst}) on  $\alpha\in\enn^r\setminus\Gamma$. Let
$\alpha_0=\min_{\leqslant_T}(\enn^r\setminus\Gamma)$. Using \eqref{deltas}, there exists $G_k(X)\in\mathcal{G}$ such that  $\alpha_0$ is an entry with respect to  $G_k(X)$, i.e. there exists $t\in\enn^r$ such that $\alpha_0=t+d$ with $d=\lep G_k(X)$. Since  $G_k(X)\circ W=0$, we then have
\begin{equation}\label{equat1}
  \sum_{\alpha\leqslant_T d} G_{k\alpha}W_{\alpha+t} = 0,
  \end{equation}
and
\begin{equation}\label{equat2}
W_{\alpha_0} = W_{t+d} = -\frac{1}{G_{kd_k}}\sum_{\alpha<_T d}
G_{k\alpha}W_{\alpha+t}.
\end{equation}
But, since
\begin{equation*}
 \alpha<_T d\Longrightarrow\alpha+t<_Tt+ d = \alpha_0,
\end{equation*}
and by the choice of $\alpha_0$, we necessarily have
$\alpha+t\in\Gamma$. Thus, $W_{\alpha+t}=V_{\alpha+t}$ is already known and $W_{\alpha_0}$  can be calculated by {\eqref{equat2}}
for $\alpha_0=\min_{\leqslant_T}(\enn^r\setminus\Gamma)$. Now, let  $\alpha\in\enn^r\setminus\Gamma$ and suppose, by the recurrence hypothesis that $W_v$ is already calculated for  $v$ with $\alpha_0\leqslant_T v< \alpha$.  Using again \eqref{deltas} there exists $t\in\enn^r$
and $G_l(X)\in\mathcal{G}$ such that $\alpha=t+d$, with $d=\lep
(G_l)$. As in {\eqref{equat2}}, we have
\begin{equation}\label{equat3}
W_\alpha = W_{t+d} = -\frac{1}{G_{kd}}\sum_{\alpha<_T d}
G_{k\alpha}W_{\alpha+t},
\end{equation}
and $W_{\alpha+t}$ is already known by the recurrence hypothesis, since we have $\alpha+t<_T\alpha+d=\alpha$. Thus $W_\alpha$ can be  uniquely calculated by\eqref{equat3}. Therefore, by n{\oe}therian recurrence, we can calculate $W_\alpha$ for $\alpha\in\enn^r\setminus\Gamma$.
\end{proof}

Now, consider the one dimensional case  $r=1$. Let :\\
  $\bullet$ $E$ be the generalized transform of the error
  $e$,\\
  $\bullet$ $F(X)$ be the \textit{characteristic polynomial} of $E^\perp$  and $d=\deg
  F(X)>1$,\\
 $\bullet$ $S=\kr F(X)$.\\
 Then $\Gamma$ and $\Delta(E^\perp)$ are of the following forms\\
 \begin{equation*}
 \Gamma = \{0,\ldots,d-1\}\subset\enn,\quad \Delta(E^\perp) = \Delta(F(X)) =\{V_0,\ldots,V_{d-1}\}
 \end{equation*}
and we have a simpler version of lemma \ref{cauchy0}: \\

 \textit{Every element $W\in S$ is the unique solution of the Cauchy's equations}
\begin{equation*}
 \begin{cases}
     \quad F(X)\circ W =0, \\
     \quad  W_{\vert\{0,\ldots,d-1\}}= (V_0,\ldots,V_{d-1}\}\in\ef_q^d,
  \end{cases}
\end{equation*}
We can directly calculate $W$ with $F(X)$ and
$V$. Indeed, write $F(X)=\sum_{i=1}^d F_iX^i$ with $d=\deg
F(X)$ et $F_d=1$. We have $W_h=V_h$ for $h\leqslant d-1$. For $k\in\enn$, we have :
\begin{equation*}\
  \sum_{i=1}^d F_iW_{i+k} = 0\quad\text{et}\quad W_n = W_{d+k} = -\sum_{1\leqslant i<d}F_iW_{i+k},
\end{equation*}
and this defines $W_n$ using $W_h$, with $h<n$.\qed

We may consider \eqref{cauchy0} as the fundamental equation which lies behind the BMS algorithm in the decoding process. However, at the beginning, the matrix $G(X)$ in \eqref{cauchy0},  is of course unknown, because it is constructed from the unknown syndrome $E$.  But, by \eqref{err}, we have $\gt(w)=\gt(c+e)= \gt(c)+\gt(e)$. Using \eqref{agc} and \eqref{gt}, we have $[\gt(c)]_\alpha = 0$ whenever $\wdeg(\alpha)\leqslant a$  (where $[W]_\alpha$ also denotes the coefficient of the power series $W\in\ef_q[Y]$ with respect to $Y^\alpha$). Let $Z$ be the set
\begin{equation}\label{zero}
  Z = \{\alpha\in\enn^r\;\vert\; \wdeg(\alpha)\leqslant a\}.
\end{equation}
We then have $[\gt(w)]_\alpha=[\gt(e)]_\alpha = E_\alpha$ for $\alpha\in Z$, so that $E_\alpha$ is known  on the set $Z$ only since it is equal to $[\gt(w)]_\alpha$ and $w$ is known.\\

The general idea of the BMS algorithm is  to  use these known terms of $E$ to construct some polynomials, which are valid \textit{recurrence relations} for theses terms. Then, using these polynomials, the algorithm calculates more terms of $E$ and so on. Finally, the algorithm finds a Gr\"{o}bner basis of the ideal $E^\perp$, which, in turn, by \eqref{cauchy0}, allows to calculate $E$, and $e$, using the inverse of the $\gt$ transform.



\begin{thebibliography}{99}


\bibitem{Andr-1}R.~Andriamifidisoa and H. Randriambolasata, ALGEBRAIC ADJOINT OF THE
POLYNOMIALS-POLYNOMIAL MATRIX MULTIPLICATION, Journal of Algebra and Related Topics
Vol. 5, No 2, (2017), pp 25-33.
\bibitem{Bras} M.~Bras-Amor\'{o}s and M.~E. O'Sullivan,  From the euclidean algorithm for solving a key equation for dual Reed-Solomon codes to the Berlekamp-Massey algorithm. Applied Algebra, Algebraic Algorithms and Error-Correcting Codes (2009).
\bibitem{Cox}D.~Cox, J.~Little, and D.~O'Shea, Ideals, Varieties and
Algorithms: An Introduction to computational Geometry and commutative Algebras. 2nd Edition, Springer, 1997.
\bibitem{Cox-2}D.~Cox, J.~Little, and D.~O'Shea, Using Algebraic Geometry. 2nd Edition, Springer, 2005.
\bibitem{Faugere} J.~ C.~Faug\`{e}re and C.~Mou, Sparse FGLM algorithms. J. Symb. Comput.(2016).
\bibitem{Fulton} W.~Fulton, Algebraic Curves: An Introduction to Algebraic Geometry, Retwood City, CA; Addison-Wesley, 1989.
\bibitem{HS}C.~Heegard and K.~Saints, Algebraic-Geometry Codes and
  Multidimensional Cyclic Codes: A Unified Theory and Algorithms
  for Decoding using Gr\"{o}bner Bases.
\newblock IEEE Trans. Inf. Theory, {\bf 41}, 1733--1751 (1995).
\bibitem{Kuijper} M.~Kuijper, Berlekamp-Massey  Algorithm, Error-correction, Keystream and Modeling. Dynamical Systems, Control, Coding, Computer Vision. Springer (1999).
\bibitem{Massey} J.~L.~Massey, Shift-Register Synthesis and BCH Decoding. IEEE Trans. Inform. Theory 15 (1), 122-127 (1969).
\bibitem{Oberst} U.~Oberst, Multidimensional Constant Linear Systems. Acta Appl. Math. 20:1-175 (1990).
\bibitem{Sakata-0} S.~Sakata, Finding a Minimal Set of Linear Recurring Relations Capable of Generating a Given Finite Two-dimensional Array. J. Symbolic Computation \textbf{5}, 321-337 (1988).
\bibitem{Sakata-2} S. Sakata, Extension of the Berlekamp-Massey Algorithm to N Dimensions. INFORMATION AND COMPUTATION \textbf{84}, 207-239 (1990).
\bibitem{Sakata-3}S.~Sakata,, Decoding binary 2-D cyclic codes by the 2-D Berlekamp-Massey
algorithm. IEEE Trans. Inform. Theory 37 (4), 1200-1203 (1991).
\bibitem{Sakata-4}S.~Sakata, The BMS Algorithm and Decoding of AG Codes. Gr\"{o}bner Bases, Coding, and Cryptography. Springer Berlin
Heidelberg, 143-163 (2009).
\bibitem{Sullivan} M.~ E.~O'Sullivan, New Codes for the Berlekamp-Massey-Sakata Algorithm. Finite Fields and Their Applications 7, 293-317 (2001).
\end{thebibliography}
\end{document}